\documentclass[%
reprint,
amsmath,amssymb,
aps,
]{revtex4-2}

\usepackage{graphicx}
\usepackage{dcolumn}
\usepackage{bm}
\usepackage[colorlinks=true,linkcolor=black]{hyperref}

\usepackage{derivative}
\usepackage{enumitem}
\usepackage{array}
\usepackage{amsthm}
\renewcommand\vec{\mathbf} 
\newcolumntype{L}[1]{>{\raggedright\arraybackslash}p{#1}}

\newtheorem{lem}{Lemma}
\newtheorem{cor}{Corollary}
\newtheorem{theorem}{Theorem}

\newtheoremstyle{axiom}
{4pt}
{2pt}
{}
{1.0em}
{}
{.}
{0.3em}
{\thmname{#1}~\thmnumber{(#2)}\thmnote{ #3}}
{}

\theoremstyle{axiom}
\newtheorem{axm}{Axiom}
\newtheorem{defn}{Definition}
\usepackage{xr}
\makeatletter
\newcommand*{\addFileDependency}[1]{
\typeout{(#1)}
\@addtofilelist{#1}
\IfFileExists{#1}{}{\typeout{No file #1.}}
}\makeatother
\newcommand*{\myexternaldocument}[1]{%
\externaldocument{#1}%
\addFileDependency{#1.tex}%
\addFileDependency{#1.aux}%
}
\myexternaldocument{si}

\begin{document}

\title{A Local Gauge-Covariant Formulation of Classical Dynamics}
\author{Gunjan Auti}
\email{gunjanauti@thml.t.u-tokyo.ac.jp}
\affiliation{Department of Mechanical Engineering, The University of Tokyo, 7-3-1 Hongo, Bunkyo-ku, Tokyo 113-8656, Japan}

\author{Gouhei Tanaka}
\affiliation{Program of Computational Intelligence, Graduate School of Engineering, Nagoya Institute of Technology, Nagoya 466-8555, Japan}
\affiliation{International Research Center for Neurointelligence, Institutes for Advanced Study, The University of Tokyo, Tokyo 113-0033, Japan}

\author{Hirofumi Daiguji}
\affiliation{Department of Mechanical Engineering, The University of Tokyo, 7-3-1 Hongo, Bunkyo-ku, Tokyo 113-8656, Japan}

\date{\today}

\begin{abstract}
Classical dynamical laws are conventionally formulated as closed evolution equations defined on fixed geometric backgrounds and a global time parameter. We develop a formulation in which neither prescribed evolution laws nor an external clock are assumed \emph{a priori}. Grounded in the principles of conservation, locality of interaction, and independent local frame freedom, the framework treats spatial geometry as a relational structure that may evolve together with the state.
We introduce a notion of local incompatibility defined as the covariant difference between neighboring states under a dynamical transport connection. Because the transport relations are not fixed, restoring compatibility requires the joint adaptation of both state variables and transport geometry. We show that locality, gauge covariance, and coercivity strongly restrict the admissible form of this incompatibility and lead to a simple, globally additive, gauge-invariant quadratic measure of mismatch. Admissible dynamics are then defined as the asynchronous, finite-rate relaxation of this measure, without assuming a predefined action principle. A global time description appears only as an effective coarse-grained limit of this local relaxation process.
In appropriate limits, the resulting compatibility-restoration dynamics recovers familiar continuum equations, including diffusion, incompressible Navier–Stokes, and the Amp\`ere–Maxwell relation. In this sense, dynamics arises from the coupled evolution of state and transport geometry toward local gauge consistency. The formulation provides a constructive framework in which effective physical laws emerge from local relational constraints.
\end{abstract}

\maketitle

\section{Introduction}
At the most basic level, we usually define a physical system as some kind of ``stuff" spread out in space, for example, mass in a fluid, charge in a material, or momentum in motion. We describe this using a state $U$, which simply tells us how much of that stuff is present at each location. But this familiar picture quietly assumes something strong: that the space and the state are separate. The space is taken as a fixed background, and it already tells us how to compare what is happening at one location with what is happening at another. These rules of comparison are assumed to hold everywhere in the system, and to apply simultaneously at each moment in time [Fig.~\ref{fig:figure1}(a)].

In this work, we step away from this assumption. Instead of taking the comparison structure as given, we allow it to evolve along with the state itself [Fig.~\ref{fig:figure1}(b)]. In this view, both the relations between neighboring regions and the states defined on them evolve through local interaction.

\begin{figure*}
    \centering
    \includegraphics[width=\textwidth]{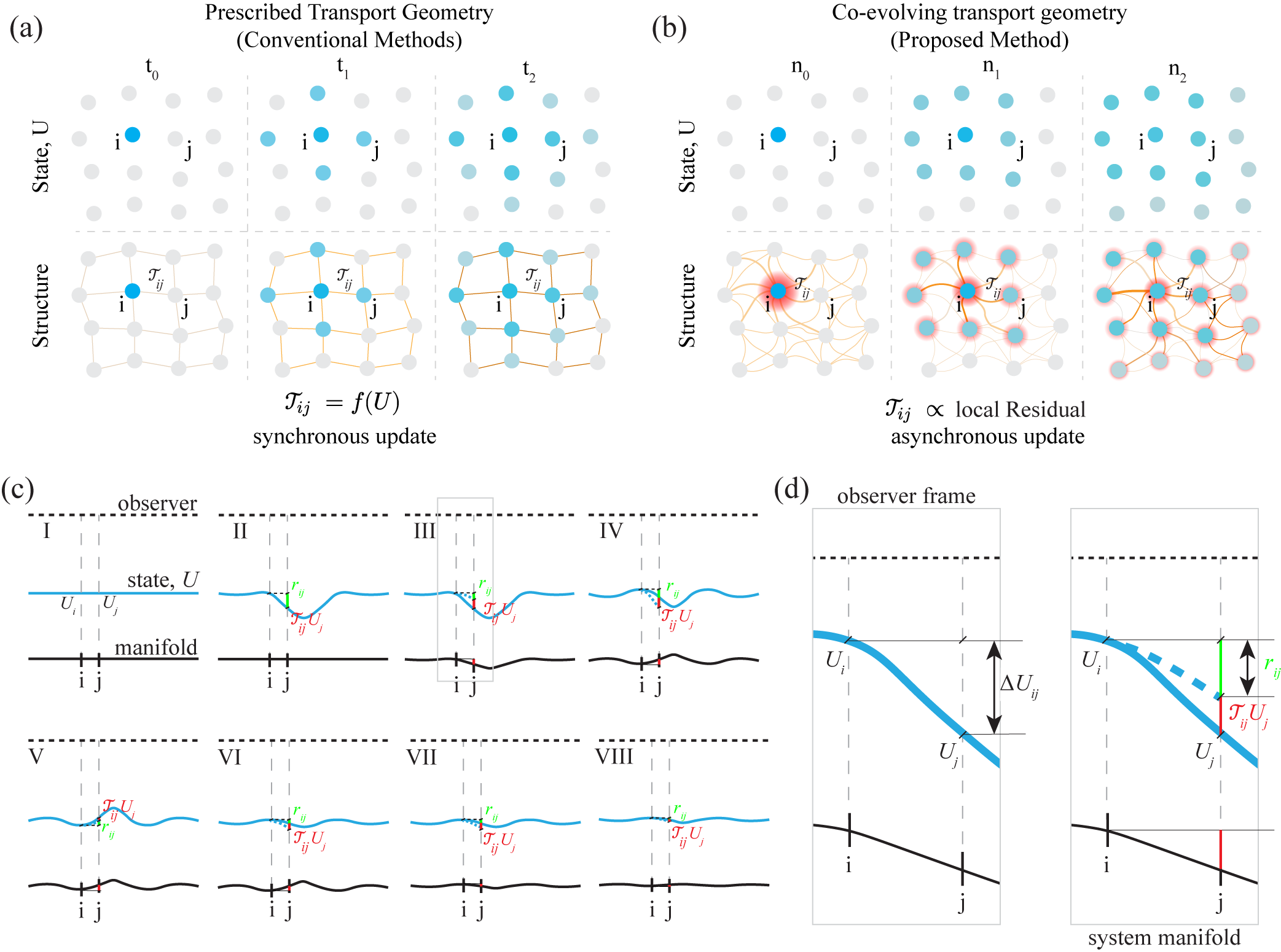}
    \caption{\textbf{Conceptual overview of the formulation based on local inconsistency and dynamical transport geometry.} In both (a) and (b), the system is represented as a graph of local regions (nodes) connected by transport relations $\mathcal T_{ij}$ (edges). The color of each node represents the local state $U_i$, and the panels labeled 0, 1, and 2 indicate successive stages of evolution from the same initial condition. The lower row in each panel shows the corresponding transport structure used to compare neighboring states.
    (a) In standard formulations, the relations used to compare neighboring regions, $\mathcal T_{ij}$, are prescribed a priori. The transport geometry evolves globally and synchronously $(t_0, t_1, t_2)$ according to the same prescribed function of the state, $f(U)$, so that only the local state changes while the comparison structure remains fixed.
    (b) In the present formulation, the transport relations are treated as dynamical variables. Both the state and the transport geometry adapt asynchronously $(n_0, n_1, n_2)$ in response to locally measured inconsistency (residual), shown by the red hue around the nodes in the structural graph, allowing the effective comparison structure between neighboring regions to evolve during the dynamics. Different colors of the connections indicate different transport rules, different thickness denotes the strength (weight) of the connection, and different paths represent changes in the transport geometry (curvature or trajectory) between neighboring regions.
    (c) Schematic illustration of the coupled evolution of the state $U$ (blue) and the transport geometry encoded by the system manifold (black), observer (dashed black) can only observe the state $U$. At each panel, the local state $U_i$ is compared with the neighboring state $U_j$  expressed in the local frame via the current transport $\mathcal T_{ij}$. Incompatibility, $r_{ij}$, is present whenever the transported state $\mathcal T_{ij}U_j$ (dashed blue) fails to coincide with $U_i$, independent of whether the mismatch is initiated by changes in the state or in the geometry. Such incompatibilities drive subsequent adjustments of both $U$ and $\mathcal T$, leading to mutual relaxation without reference to an external time parameter.
    (d) Comparison with an observer-frame description. In a fixed flat chart, only the raw difference $\Delta U_{ij} = U_i - U_j$ is accessible. In the system description, meaningful comparison requires transport along the manifold, and the intrinsic incompatibility is given by the covariant mismatch $r_{ij} \propto (U_i - \mathcal T_{ij}U_j)$.}
    \label{fig:figure1}
\end{figure*}

At the resolved level, physical interactions are local: each region exchanges conserved quantities only with its neighbors and responds only to information available within that neighborhood \cite{einstein1935can, maxwell1865viii, goldstein1950classical}. Because physical influence propagates at finite speed, the relations needed to compare neighboring states are not automatically guaranteed to be globally coordinated or instantaneously available. Standard dynamical formulations, however, usually begin by assuming exactly such a coordination. One postulates an action functional
\begin{equation}
S[U] = \int L(U,\dot U, t)\, dt,
\end{equation}
and physical trajectories are obtained by requiring stationarity of the action. The resulting Euler–Lagrange equations produce closed evolution laws of the form
\begin{equation}
\frac{\partial U}{\partial t} = F(U,\Theta),
\label{eq:eq_one}
\end{equation}
where $F$ encodes forces, transport, and constraints, and $\Theta$ denotes auxiliary assumptions or parameters \cite{poincare1899methodes, goldstein1950classical, strogatz2024nonlinear}. In such formulations, the existence of a globally defined time, a known Lagrangian, and a fixed geometric structure are assumed from the outset. Consequently, the relations that define how neighboring states are compared and how transport occurs are specified independently of the evolving state itself, effectively separating the state from the structure that governs its interaction.

Accordingly, in these formulations neighboring states can be meaningfully compared through a predefined relations. In geometric terms, this means that the rule used to relate neighboring states is assumed in advance, typically through a prescribed connection or symmetry structure that specifies how local frames are compared \cite{yang1954conservation}. Similarly, in non-equilibrium thermodynamics and gradient-flow formulations, the structure used to compare neighboring states, along with the associated transport coefficients, is specified \emph{a priori} \cite{onsager1931reciprocal,onsager1931reciprocal2, prigogine1967symmetry}. More broadly, geometric and data-driven approaches adopt fixed representation spaces, for example, the manifolds, lattices, or feature spaces, within which neighboring states are compared through prescribed relations.

These constructions provide consistent descriptions when the admissible class of transport relations is known. However, in many far-from-equilibrium and multiscale systems, the effective relations governing transport and coordination depend on unresolved processes, local history, or evolving interactions. Examples include active suspensions and pattern-forming systems with multiple intrinsic timescales \cite{cross1993pattern, ramaswamy2010mechanics, marchetti2013hydrodynamics, saintillan2013active}; under-resolved or weakly compressible flows, where missing information leads to non-Markovian or optimal-prediction descriptions \cite{chorin2000optimal}; and electrochemical or phase-transforming systems, where local kinetics determines coupling between neighboring regions \cite{bazant2009towards}. In such regimes, the relations required to compare and transport local states are not available \emph{a priori}, but must instead be established through interaction. Consistency between neighboring regions is therefore not a structural property of the model, but a dynamical requirement arising from the mutual coordination of local descriptions.

Related attempts to formulate dynamics from relational consistency rather than from independently prescribed field evolution have appeared in several contexts. In absorber formulations of electrodynamics, interaction laws arise from mutual consistency between degrees of freedom rather than from independent field variables \cite{wheeler1945interaction,wheeler1949classical}. Relational approaches to mechanics and quantum theory similarly define evolution through relations between subsystems rather than with respect to an external time parameter \cite{rovelli1991time, rovelli2011forget}. In discrete and lattice gauge formulations, geometry is encoded in local transport relations defined on a prescribed lattice and symmetry group, with curvature determined by consistency around loops \cite{wilson1974confinement, kogut1979introduction}. Analogous ideas also appear in information geometry and geometric learning, where evolution or learning is formulated on representation spaces endowed with specified metrics, connections, or symmetry constraints \cite{amari2000methods, bronstein2017geometric, bruna2014spectral, lecun2022path}. In our previous work on Hebbian Physics Networks (HPN) \cite{auti2026hebbian}, local dynamics was similarly driven by the relaxation of mismatch in conserved quantities, enabling adaptive transport coefficients to emerge from local interaction.

Across these approaches, relational consistency plays a central role, but the structure used to compare neighboring states is typically specified in advance. The manifold, lattice, symmetry group, or representation space is chosen \textit{a priori}, and the dynamical variables evolve within this prescribed comparison framework. While geometric or field variables defined on this structure may adapt, the admissible form of the comparison rule itself remains restricted to a fixed class, so that the transport relations $\mathcal{T}_{ij}$ evolve according to a predefined dependence on the state [Fig.~\ref{fig:figure1}(a)]. As a result, the transport relations are not treated as independent dynamical variables, and the comparison structure does not participate in the dynamical closure.

In contrast to the preceding approaches, we do not assume that the relations used to compare neighboring states are specified in advance. Instead, the transport relations themselves are treated as independent dynamical variables that coevolve with the state, so that the comparison structure becomes part of the dynamical closure.

We formulate dynamics at a level prior to specifying constitutive relations or a closed evolution operator. Rather than assuming a predefined mapping from the state to its time derivative, we adopt an operational description based only on quantities accessible through local interaction. The system is described by state variables representing local densities of globally conserved quantities, whose evolution arises through redistribution between neighboring regions. We retain locality, global conservation, and covariance under independent changes of local reference frame, but do not assume a prescribed transport geometry or a universally synchronized notion of time. Under these conditions, the effective geometry relating neighboring regions is not specified \emph{a priori}, but emerges through the relaxation of local incompatibility, as illustrated in Fig.~\ref{fig:figure1}(b). This extends relational approaches to dynamics \cite{rovelli2011forget} by allowing the comparison structure itself, rather than only the state defined on it, to participate in the evolution.

Within this setting, local incompatibility $r_{ij}$ emerges as the fundamental relational quantity between neighboring regions, defined as the covariant mismatch under the current transport relation $\mathcal{T}_{ij}$ [see Fig.~\ref{fig:figure1}(c) and (d)]. Incompatibility is therefore an interfacial quantity: it vanishes if and only if the two neighboring descriptions can be identified under the instantaneous transport relation, and is nonzero otherwise. In Sec.~\ref{sec:residual} we define node-wise residuals $R_i$ as the accumulation of interfacial incompatibilities at each region, representing unresolved local mismatch and admitting a finite-volume interpretation with a well-defined continuum limit. From these, we construct a scalar functional in Sec.~\ref{sec:L_def} that quantifies instantaneous incompatibility. Dynamics is then defined as the finite-rate relaxation of this global measure $\mathcal L$ through coupled evolution of the state variables and transport relations, as detailed in Sec.~\ref{sec:updates}.

Classical closed evolution laws arise not as primary postulates, but as limiting descriptions of this relaxation process once transport relations stabilize (see Supplementary Information Sec. S2). When transport is prescribed \emph{a priori} and incompatibility is removed instantaneously, evolution reduces to motion on a constraint manifold—for example, in incompressible flow, where pressure enforces instantaneous consistency of the velocity field. More generally, projection methods and elliptic constraint solvers correspond to regimes in which mismatch is eliminated kinematically rather than relaxed dynamically. By contrast, when transport structure evolves and incompatibility persists at finite rate, no prior closure or global time parameter is required, and classical PDEs emerge only as effective coarse-grained limits.

At an operational level, each local region carries a state $U_i$, and neighboring regions are compared through transport relations that map one local frame to another. When the transported neighbor state $\mathcal T_{ij}U_j$ fails to coincide with $U_i$, a local incompatibility $r_{ij}$ arises. Dynamics is defined as the finite-rate relaxation of this incompatibility through the coupled evolution of both the state variables and the transport relations. In this sense, dynamics is not postulated as an evolution law in time, but emerges from the progressive restoration of local consistency.

The remainder of the paper develops the mathematical structure of this framework and derives the coupled state–structure evolution rules implied by this formulation.

\section{Mathematical formulation} 
We now introduce the mathematical framework used in this work. The formulation is developed axiomatically, beginning with the specification of primitive quantities and the minimal structural requirements needed to relate them. No closed evolution law, constitutive relation, transport geometry, or variational principle is assumed at the outset. Instead, the framework identifies which objects must be postulated and which can be constructed from consistency requirements between neighboring descriptions. Subsequent sections show how local comparison, accumulation of mismatch from these comparisons, and coupled relaxation emerge from these axioms and in appropriate limits lead to familiar continuum formulations.

\subsection{Axioms} 
\label{sec:axioms}
The formulation is based on a small set of structural assumptions with clear antecedents in established physical and mathematical theories, but not typically adopted together within a single dynamical description. These include locally defined state variables associated with finite regions, locality of interaction, the existence of a differentiable manifold without prescribed transport geometry, covariance under independent local frame transformations, non-negative measures of mismatch, finite-rate structural adaptation, local smoothness, and the existence of a norm on the state space.

We now state the axioms explicitly.

\setcounter{axm}{0}
\begin{axm}\textit{State is primitive.}
\label{ax:axm0}
The system is described at a given resolution by a set of local state variables \[U = (U_\alpha), \quad \alpha \in \{1,2,3,…\},\] where each 
$U_\alpha$ represents the density of a globally conserved carrier (such as mass, charge, or momentum) associated with a local control region $\Omega$. At the resolved scale, $U_\alpha$ admits no intrinsic local creation or annihilation: its local change arises solely through exchange with neighboring regions and through explicitly represented conversion between tracked carriers. Variables that act only as constraint enforcers, response parameters, or constitutive descriptors (such as pressure or temperature in standard continuum descriptions) are not included in $U$. 
\end{axm}

\begin{axm}\textit{Locality.} 
\label{ax:axm1}
Physical evolution is governed exclusively by local interactions. Each degree of freedom has access only to quantities defined within its immediate neighborhood, and all updates are driven by locally measured imbalance.
\end{axm} 

\begin{axm}
\label{ax:axm2}
\textit{The manifold exists \textit{a priori}; the geometry does not.} 
The system is assumed to admit a smooth differentiable manifold structure, providing local charts, neighborhoods, and tangent spaces sufficient to define locality and continuity. However, no transport geometry is prescribed \emph{a priori}: no background metric, connection, or constitutive relation is assumed.

Instead, the relations governing transport and comparison between neighboring regions are treated as dynamical objects that evolve with the state. Neighboring degrees of freedom are therefore compared only through the currently available transport relations, which adapt in response to inconsistency of description.

Local changes of frame are identified with gauge transformations belonging to a group (typically a Lie group in the smooth limit) acting on the local state space, so that geometric information is encoded in a dynamical connection rather than a fixed background structure.
\end{axm}

\begin{axm}
\label{ax:axm3}
\textit{Local frame freedom (gauge covariance).} 
The local state at each point is defined only up to an arbitrary choice of internal frame. Physical descriptions must therefore be invariant under independent local frame transformations, and any admissible transport relation or measure of incompatibility must transform covariantly under such changes of frame.

Consistency between neighboring descriptions is thus defined relative to the current transport relations connecting their local frames.
\end{axm}

\begin{axm}
\label{ax:axm4} 
\textit{Coercivity and regularity.} 
Local incompatibility is defined as the failure of neighboring descriptions to satisfy the relevant consistency conditions under the current transport relations. One or more non-negative measures may be associated with distinct transport channels. Each such measure vanishes when compatibility is satisfied and grows with the magnitude of unresolved mismatch, implying a positive cost for incompatibility.

Admissible configurations are assumed to remain bounded: both the state variables and the transport relations have finite magnitude. This ensures that incompatibility measures admit well-defined norms and that the resulting dynamics remains regular.
\end{axm}

\begin{axm}
\label{ax:axm5}
\textit{Finite-rate structural response.} 
Transport geometry does not adapt instantaneously, but evolves over a finite internal scale. Transport relations therefore persist across updates and relax through local, generally dissipative adaptation rather than immediate equilibration.

This introduces a separation between state evolution and structural adaptation.
\end{axm}

\begin{axm}
\label{ax:axm6} 
\textit{Local linearizability of transport.}  
Whenever transport relations between neighboring degrees of freedom are well defined, they are assumed to be locally smooth and sufficiently close to the identity to admit a tangent-space (Lie algebra) representation. This permits local linearization of transport operators and the definition of infinitesimal generators.
\end{axm}

\begin{axm}
\label{ax:axm7} 
\textit{Normability.} 
The local state space admits an inner product (or norm) sufficient to define finite incompatibility measures and their magnitude.
\end{axm}

The axioms introduced above are not ad hoc assumptions, but reflect structural elements that appear across several established theoretical frameworks. The use of locally defined state variables associated with finite regions is standard in continuum mechanics and statistical physics \cite{landau1987fluid}. Locality of interaction underlies relativistic field theory, finite-volume methods, and lattice gauge formulations \cite{wilson1974confinement}. The separation between manifold structure and transport geometry is familiar from differential geometry and gauge theory \cite{levicivita2022}. Covariance under independent local frame transformations corresponds to the gauge principle in modern field theory \cite{yang1954conservation,truesdell2004non}. The use of non-negative measures of mismatch to drive relaxation appears in non-equilibrium thermodynamics and gradient-flow formulations \cite{de2013non}. Finite-rate adaptation of structural relations is implicit in kinetic theory, plasticity, and learning dynamics \cite{prigogine1967symmetry,lecun2022path}. Local smoothness and tangent-space representations are standard in geometric transport \cite{levicivita2022,yang1954conservation}, and the use of norms to quantify deviation from compatibility is common in variational and information-geometric frameworks \cite{amari2000methods,bronstein2017geometric}. 

What distinguishes the present formulation is not the introduction of new axioms, but the simultaneous adoption of these familiar axioms without assuming a predefined evolution law, fixed transport geometry, or globally synchronized notion of time. Under these conditions, the admissible form of local incompatibility and the resulting coupled state–structure dynamics are not arbitrary, but restricted by these structural requirements.

Together, these axioms define a minimal framework for systems in which local conservation and transport are well defined, without assuming a closed evolution law or variational principle \emph{a priori}. Within this setting, local incompatibility becomes a physically meaningful quantity whose relaxation drives both state evolution and structural adaptation.

In regimes where compatibility is progressively restored and transport relations stabilize, the resulting dynamics admit effective descriptions consistent with familiar continuum, projection, and variational formulations. Classical evolution laws therefore appear as limiting cases of a more general local relaxation process rather than as primary postulates.

In the following sections, we show that under these axioms, both the form of admissible structural adaptation and the conditions under which classical dynamical descriptions become applicable are constrained.

\subsection{Local symmetry and compatibility}
\label{sec:local_sym_and_compat}

\subsubsection{Covariant difference}
Let $\mathcal{M}$ be a smooth $n$-dimensional manifold representing the configuration or physical space of the system [Axiom~(\ref{ax:axm2})]. We consider a collection of points on $\mathcal{M}$ labeled by indices $i,j: j \in \mathcal{N}(i)$, where $\mathcal{N}(i)$ denotes the neighborhood of point $i$. Each associated with coordinates $\vec{x}_i\in\mathbb{R}^n$ in a chosen external chart. The labels $i$ and $j$ identify locations on the manifold, while the coordinates $\vec{x}_i$ provide a convenient parametrization.

At each point $i$, we associate a local state variable $U_i \equiv U(\vec{x}_i)\in\mathbb{R}^d$ (or $\mathbb{C}^d$), representing the resolved degrees of freedom at that location.

Each point $i$ is further endowed with a local frame represented by an element $g_i\in G$, where $G$ is a Lie group acting (via a chosen representation) on the state space of $U_i$. A local change of frame is defined pointwise by the gauge transformation
\begin{equation}
U_i \mapsto U_i' \equiv g_iU_i,
\label{eq:gauge_transform_U}
\end{equation}
which encodes the fact that the numerical representation of the state depends on a local choice of frame at $i$.

To define \textit{nearby} regions, we introduce a local neighborhood structure on $\mathcal{M}$.  Because no transport geometry or metric is assumed \textit{a priori} [Axiom~(\ref{ax:axm2})], locality cannot be defined through a fixed distance intrinsic to the system. Instead, neighborhood relations are defined in the observer chart used to represent the manifold.

Let $x_i \in \mathbb{R}^n$ denote the coordinates of point $i$ in a chosen external chart. For each point $i$, we define a neighborhood $\mathcal{N}_\epsilon(i)$ consisting of points $j$ whose coordinate separation from $i$ is smaller than a prescribed resolution scale $\epsilon$ according to some observer-defined notion of distance.

Concretely, one may write for example
\[
\|x_j - x_i\|_\infty \le \epsilon,
\]
where $\|\cdot\|_\infty$ denotes the Chebyshev (max-norm) distance in the chosen chart. This choice is made only for convenience and does not play a structural role.

Any definition of neighborhood that induces the same local topology is admissible, including Euclidean distance, $\ell^1$ distance, $k$-nearest-neighbor relations, Delaunay/Voronoi connectivity, or other discretizations consistent with the observer’s resolution. All subsequent constructions depend only on the existence of a finite local neighborhood and on local smoothness, and are therefore invariant under the particular choice of norm or discretization used to define $\mathcal{N}_\epsilon(i)$.

In this sense, locality is defined topologically rather than metrically: the framework requires only that each degree of freedom interacts with a finite set of nearby neighbors, as determined in the observer chart, while the transport geometry relating those neighbors is treated as a dynamical quantity. This reflects a statement about finite information access in the observer chart rather than the existence of a fixed intrinsic metric on the system.

In the present formulation, the neighborhood structure $\mathcal{N}_\epsilon(i)$ is fixed by the observer’s resolution scale and does not evolve dynamically. The dynamical variables act only on the transport operators that relate neighboring points. Allowing the neighborhood itself to change, corresponding to structural adaptation of the interaction graph, lies beyond the scope of the present work and will be considered separately.

To compare states residing in different local frames, we introduce for each ordered pair $(i\leftarrow j)$ with $j\in\mathcal{N}_\epsilon(i)$ a transport operator $\mathcal{T}_{ij}$ that maps quantities defined at point $j$ into the local frame at point $i$. This operator encodes the local transport geometry relating neighboring points on the manifold. 

Gauge consistency requires the transport operator to transform by conjugation under local frame changes,
\begin{equation}
\mathcal{T}_{ij} \mapsto \mathcal{T}_{ij}' \equiv g_i\mathcal{T}_{ij}g_j^{-1}.
\label{eq:gauge_transform_T}
\end{equation}
With this transformation rule, the transported neighbor state transforms covariantly in the frame at $i$:
\begin{equation}
\mathcal{T}_{ij}U_j \mapsto \mathcal{T}_{ij}'U_j'
= (g_i\mathcal{T}_{ij}g_j^{-1})(g_jU_j)
= g_i(\mathcal{T}_{ij}U_j).
\label{eq:transport_covariance}
\end{equation}
Eq.~\eqref{eq:transport_covariance} ensures that differences of the form
\begin{equation*}
\mathcal{T}_{ij}U_j - U_i
\end{equation*}
are well-defined, frame-consistent quantities at point $i$.

\subsubsection{Local smoothness and tangent-space representation}

We now formalize the local smoothness of underlying the transport geometry. Under [Axiom~(\ref{ax:axm6})], whenever transport relations are well defined, each point admits a neighborhood in which the transport operators can be represented within a single local coordinate chart of the Lie group. This permits a tangent-space description of transport relations without assuming global flatness or trivial topology.

\begin{lem}
\label{lem:local_lie_algebra}
Given local smoothness [Axiom~(\ref{ax:axm6})], for each point $i$ and it's neighbor $j \in \mathcal{N}(i)$, there exists a unique element $W_{ij} \in \mathfrak{g}$ such that the transport operator $\mathcal{T}_{ij}$ is given by the exponential map $\mathcal{T}_{ij} = \exp(W_{ij})$. In a local linear representation, this admits the tangent-space approximation  
\begin{equation}
    \mathcal{T}_{ij} \approx I + W_{ij},
    \label{eq:T_equals_I+W}
\end{equation}
which remains gauge-covariant under local transformations. 
\end{lem}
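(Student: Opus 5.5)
The plan is to use the standard fact from Lie theory that the exponential map $\exp:\mathfrak g\to G$ is a local diffeomorphism at the origin. Concretely, its differential at $0\in\mathfrak g$ is the identity map $\mathfrak g\to T_eG\cong\mathfrak g$, so the inverse function theorem gives a neighborhood $V\subset\mathfrak g$ of $0$ and a neighborhood $N\subset G$ of the identity $I$ such that $\exp|_V:V\to N$ is a diffeomorphism.

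\textbf{Existence and uniqueness.} Axiom~(\ref{ax:axm6}) states that, whenever $\mathcal T_{ij}$ is well defined, it is sufficiently close to the identity to admit a tangent-space representation. I will read this as the precise requirement $\mathcal T_{ij}\in N$ for every $j\in\mathcal N(i)$. Then I define $W_{ij}:=(\exp|_V)^{-1}(\mathcal T_{ij})$, which is the principal logarithm $\log\mathcal T_{ij}$.

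Uniqueness must be stated carefully. It holds only within $V$, and not globally: for example, in $U(1)$ the elements $\theta$ and $\theta+2\pi$ have the same exponential. I will flag this and state the lemma's uniqueness as uniqueness in the chart $V$. Locating exactly where this restriction enters is the only real subtlety in the argument.

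\textbf{Tangent-space approximation.} In a matrix representation, $\exp(W)=\sum_k W^k/k!$. This gives
\[
\|\mathcal T_{ij}-(I+W_{ij})\|\le e^{\|W_{ij}\|}-1-\|W_{ij}\|=O(\|W_{ij}\|^2),
\]
so $\mathcal T_{ij}=I+W_{ij}+O(\|W_{ij}\|^2)$. The bound $\|W_{ij}\|\ll 1$ follows from the closeness to the identity assumed in Axiom~(\ref{ax:axm6}). Axiom~(\ref{ax:axm4}) guarantees that the norms involved are finite. Axiom~(\ref{ax:axm7}) is not needed here, since the norm used is an operator norm on the representation of $\mathfrak g$ rather than on the state space.

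\textbf{Gauge covariance.} This is the main obstacle, because the transformation law~\eqref{eq:gauge_transform_T} involves two different group elements $g_i$ and $g_j$. The case $g_i=g_j=g$ is straightforward: $g\exp(W)g^{-1}=\exp(\mathrm{Ad}_gW)$ and $\mathrm{Ad}_g$ preserves $\mathfrak g$, so $W_{ij}\mapsto gW_{ij}g^{-1}$ and $I+W_{ij}\mapsto g(I+W_{ij})g^{-1}$ exactly.

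For general frames, I will write $g_j=g_i h$, where $h=g_i^{-1}g_j$. Local smoothness of the frame field over the neighborhood $\mathcal N_\epsilon(i)$ makes $h$ close to the identity, so $h=\exp(\eta)$ with $\eta=O(\epsilon)$. The new transport is then
\[
\mathcal T'_{ij}=g_i\exp(W_{ij})\exp(-\eta)g_i^{-1}.
\]
Applying BCH at first order gives
\[
W'_{ij}=\mathrm{Ad}_{g_i}\bigl(W_{ij}-\eta\bigr)+O(\|W\|\|\eta\|),
\]
which is the standard inhomogeneous transformation of a lattice connection. This shows that $\mathcal T'_{ij}$ again lies in $N$ for small enough $\epsilon$, so the representation is preserved under gauge transformations. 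It also shows that $I+W'_{ij}$ agrees with $g_i(I+W_{ij})g_j^{-1}$ up to second-order terms.

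I will therefore state covariance as holding to the same order as the approximation~\eqref{eq:T_equals_I+W}. Under this reading, the transported difference $\mathcal T_{ij}U_j-U_i$ from~\eqref{eq:transport_covariance} stays covariant at linear order.
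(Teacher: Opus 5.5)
Your proposal is correct and follows the same route as the paper's proof. The exponential map is a local diffeomorphism at the identity, so the normal-neighborhood reading of Axiom~(\ref{ax:axm6}) gives $\mathcal{T}_{ij}=\exp(W_{ij})$; Taylor expansion gives $I+W_{ij}+\mathcal{O}(\|W_{ij}\|^2)$; and for a smoothly varying frame, $g_i\mathcal{T}_{ij}g_j^{-1}$ induces the adjoint action on $W_{ij}$ plus an inhomogeneous gradient-like term, so covariance holds to leading order. Your version is more careful than the paper's in writing $g_j=g_i\exp(\eta)$ multiplicatively rather than $g_j=g_i+\delta g$, in giving an explicit remainder bound, and in noting that $W_{ij}$ is unique only within the normal neighborhood $V$, not in all of $\mathfrak{g}$.
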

\begin{proof}
Since $G$ is a smooth Lie group, its tangent space at the identity $T_e G$ is canonically identified with the Lie algebra $\mathfrak{g}$. By the normal-neighborhood assumption, the exponential map $\exp: \mathfrak{g} \to G$ is a local diffeomorphism. Thus, for $\mathcal{T}_{ij}$ sufficiently close to the identity, the exponential map provides a local parametrization of $\mathcal T_{ij}$ in terms of Lie algebra element $W_{ij} \in \mathfrak{g}$, enabling the representation $\mathcal{T}_{ij} = \exp(W_{ij})$. A Taylor expansion yields:
\begin{equation}
    \mathcal{T}_{ij} = I + W_{ij} + \mathcal{O}(\|W_{ij}\|^2) 
\end{equation} 
Under a local gauge transformation $(g_i, g_j) \in G \times G$, the operator transforms as $\mathcal{T}_{ij} \mapsto g_i \mathcal{T}_{ij} g_j^{-1}$. For infinitesimal transport where $g_j = g_i + \delta g$, this transformation induces the adjoint action on $W_{ij}$ plus a gradient term, consistent with the transformation of a discrete connection. Consequently, the tangent-space representation is preserved under local gauge transformations to leading order. Continuum limit of this lemma, is discussed in Supplementary Information Sec. S1A.
\end{proof}

The framework itself is not restricted to this truncation: retaining the full nonlinear transport relation $\mathcal{T}_{ij}=\exp(W_{ij})$, or higher-order expansions thereof, preserves gauge consistency and yields well-defined incompatibility measures and adaptation dynamics. Including higher-order terms incorporates finite-transport corrections without altering the conceptual structure developed here.

\subsubsection{Incompatibility}
The assumed symmetry of the description is the freedom to choose independent local frames at each point, represented by gauge transformations of the form $U_i \mapsto g_i U_i$. Two neighboring state descriptions $(U_i, U_j)$ are said to be locally compatible under the current transport geometry if there exists a frame-consistent identification of the state at $j$ with that at $i$. Operationally, this requires that transporting $U_j$ to point $i$ using the current transport relation reproduces $U_i$,
\begin{equation}
U_i = \mathcal{T}_{ij} U_j .
\label{eq:local_compatibility}
\end{equation}

In most geometric learning and information-geometric formulations, structural compatibility is treated as a fundamental constraint of the model space. Neighboring representations are assumed to admit a consistent identification via a prescribed transport, equivariant mapping, or parallel transport operator. This assumption underlies geometric deep learning frameworks in which symmetry and equivariance are enforced by architectural design \cite{bronstein2017geometric}. Similarly, in information-geometric formulations of learning, evolution is described as motion on a smooth, pre-existing model manifold endowed with a fixed geometric structure \cite{amari2000methods}. In these settings, deviations from Eq.~\eqref{eq:local_compatibility} are typically interpreted as approximation error, sampling noise, or transient optimization artifacts, rather than as physically meaningful deviations.

This perspective implicitly assumes instantaneous enforcement of compatibility, equivalent to infinite signal or information propagation speed. While appropriate for static or equilibrium descriptions, this assumption is not justified for physical systems in which information propagates at finite rate. Following a localized perturbation, there exists a nonzero interval ($t \to 0^+$) during which neighboring regions have not yet exchanged sufficient information to restore mutual consistency, and the condition in Eq.~\eqref{eq:local_compatibility} generically fails. Such transient incompatibility is therefore not a modeling deficiency, but a direct consequence of causality and finite-rate transport [Axiom~(\ref{ax:axm5})]. This viewpoint is consistent with non-equilibrium physics, where local symmetry is not enforced instantaneously but is restored dynamically, as seen in pattern-forming systems and active matter \cite{cross1993pattern, ramaswamy2010mechanics, marchetti2013hydrodynamics}.

Accordingly, when Eq.~\eqref{eq:local_compatibility} fails, the neighborhood does not admit a consistent local identification under the current transport relation $\mathcal{T}_{ij}$. In this sense, incompatibility is defined as a \emph{failure of local identification}. Although the underlying symmetry permits a frame-consistent comparison, the presently realized transport geometry does not effect such an identification.

Any admissible measure of this failure must therefore quantify the deviation from Eq.~\eqref{eq:local_compatibility} in a manner that is local and gauge-consistent. We now show that these requirements, when restricted to first-order consistency, identify a minimal covariant form for such an incompatibility measure.

\begin{theorem}
\label{thm:incompatibility}
Let $r_{ij}$ be a local incompatibility measure associated with an ordered pair $(i\leftarrow j)$, constructed from the local states $U_i$, $U_j$ and the transport operator $\mathcal{T}_{ij}$. Suppose that $r_{ij}$ satisfies the following conditions:

\noindent(i)~Locality [Axiom~(\ref{ax:axm1})]: $r_{ij}$ depends only on $(U_i,U_j,\mathcal{T}_{ij})$.

\noindent(ii)~Gauge covariance [Axiom~(\ref{ax:axm3})]: under independent local frame transformations
\[
U_i \mapsto g_i U_i, \quad
U_j \mapsto g_j U_j, \quad
\mathcal{T}_{ij} \mapsto g_i \mathcal{T}_{ij} g_j^{-1},
\]
the incompatibility transforms covariantly at point $i$ (with respect to a $G$-invariant representation),
\[
r_{ij} \mapsto r'_{ij} = g_i r_{ij}.
\]

\noindent(iii)~Compatibility (Eq.~\eqref{eq:local_compatibility}): 
\[
r_{ij}=0 \quad \text{iff} \quad U_i=\mathcal{T}_{ij}U_j.
\]

\noindent(iv)~First order consistency [Axiom~(\ref{ax:axm6})]: $r_{ij}$ reduces linearly to zero as $U_i \to \mathcal{T}_{ij}U_j$.

\vspace{2pt}
\noindent Then, up to an invariant scalar or the application of a $G$-equivariant linear operator, the minimal representation of $r_{ij}$ consistent with these axioms is the covariant difference:
$$
r_{ij} \propto \mathcal{T}_{ij}U_j - U_i.
$$
\end{theorem}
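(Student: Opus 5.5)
The plan is to reduce the problem to a function of a single gauge-covariant variable, then expand that function to first order.

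\emph{Step 1 (reduction to one variable).} By (i), $r_{ij}=F(U_i,U_j,\mathcal T_{ij})$ for some function $F$. Equation~\eqref{eq:transport_covariance} shows that the pair $(U_i,\,V_{ij})$ with $V_{ij}\equiv\mathcal T_{ij}U_j$ transforms entirely at point $i$, as $(U_i,V_{ij})\mapsto(g_iU_i,g_iV_{ij})$. I would use the freedom in $g_j$ to argue that $F$ depends on $(U_j,\mathcal T_{ij})$ only through $V_{ij}$. Choose $g_i=e$ and $g_j$ arbitrary. Condition (ii) then gives
\[
F(U_i,\,g_jU_j,\,\mathcal T_{ij}g_j^{-1})=F(U_i,U_j,\mathcal T_{ij}).
\]
So $F$ is constant on the orbits $(U_j,\mathcal T_{ij})\sim(g_jU_j,\mathcal T_{ij}g_j^{-1})$. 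One can set $g_j=\mathcal T_{ij}$, which lies in $G$ under the standing assumption of Lemma~\ref{lem:local_lie_algebra}. This maps the pair to $(V_{ij},I)$, so $F(U_i,U_j,\mathcal T_{ij})=\tilde F(U_i,V_{ij})$ with $\tilde F(a,b)\equiv F(a,b,I)$. The remaining condition is equivariance under the diagonal action, $\tilde F(ga,gb)=g\,\tilde F(a,b)$.

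\emph{Step 2 (change of variables and use of (iii)).} Write $\tilde F(U_i,V_{ij})=H(U_i,\delta_{ij})$ with $\delta_{ij}\equiv V_{ij}-U_i$. Since the action is linear, $\delta_{ij}\mapsto g_i\delta_{ij}$ and $H$ remains diagonally equivariant. Condition (iii) says $H(U_i,\delta)=0$ if and only if $\delta=0$.

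\emph{Step 3 (first-order expansion).} Condition (iv), together with the smoothness from Axiom~(\ref{ax:axm6}), lets me Taylor expand $H$ in $\delta$ at fixed $U_i$:
\[
H(U_i,\delta)=A(U_i)\,\delta+\mathcal O(\|\delta\|^2).
\]
Here $A(U_i)$ is a linear map, and it must be injective because of (iii) and (iv). Equivariance forces $A(gU_i)=gA(U_i)g^{-1}$, so $A$ is an equivariant matrix-valued function of $U_i$. \emph{Minimality} means keeping only the lowest-order contribution, so that $r$ is linear in the state variables and not merely in $\delta$. For that, $A$ must be independent of $U_i$. I would justify this either by requiring $r_{ij}$ to carry no extra dependence on $U_i$ beyond the mismatch, or by noting that a $U_i$-dependent $A$ is a higher-order term. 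Equivariance then reduces to $Ag=gA$ for all $g\in G$, so $A$ is a $G$-equivariant linear operator. If the representation is irreducible, Schur's lemma makes $A$ a scalar multiple of the identity. This gives $r_{ij}=A(\mathcal T_{ij}U_j-U_i)$, which is the claimed form $r_{ij}\propto\mathcal T_{ij}U_j-U_i$ up to an invariant scalar or an equivariant operator.

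\emph{Main obstacle.} The delicate step is the last one: excluding $U_i$-dependent coefficients $A(U_i)$ and the higher-order terms in $\delta$. Conditions (i)–(iv) alone do not forbid them; for example, $(1+\|\delta\|^2)\delta$ or $A(U_i)\delta$ with an invariant scalar prefactor of $U_i$ are also admissible. The statement's qualifiers ``minimal'' and ``up to an invariant scalar'' are what absorb these. I would therefore state explicitly that the theorem classifies the leading-order (linear) term, and that any remaining freedom is an invariant scalar function or an equivariant operator multiplying the covariant difference.
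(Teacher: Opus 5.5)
Your proposal is correct and follows the same skeleton as the paper's proof: reduce to the pair $(U_i,\mathcal{T}_{ij}U_j)$ transforming at $i$, pass to the difference, linearize, and identify a $G$-equivariant linear part. It argues the two load-bearing steps differently, though.

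\emph{The reduction to $(U_i,\mathcal{T}_{ij}U_j)$.} The paper simply asserts that any covariant quantity built from $(U_i,U_j,\mathcal{T}_{ij})$ is expressible through $U_i$ and $\mathcal{T}_{ij}U_j$. Your gauge fixing $g_i=e$, $g_j=\mathcal{T}_{ij}$ proves this. It also exposes the hypothesis $\mathcal{T}_{ij}\in G$, supplied by Lemma~\ref{lem:local_lie_algebra}, which the paper's assertion tacitly needs.

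\emph{The passage to the difference.} The paper claims that (iii) ``excludes any construction that depends on $U_i$ or $\mathcal{T}_{ij}U_j$ separately.'' It then writes $r_{ij}=F(\mathcal{T}_{ij}U_j-U_i)$ with $F$ equivariant and reads off an equivariant linear part $\phi$ from the equivariance of $F$. Your observation that an invariant prefactor such as $(1+\|U_i\|^2)$ still satisfies (i)--(iv) shows that this step does not follow from (iii). You instead keep a coefficient $A(U_i)$ with $A(gU_i)=gA(U_i)g^{-1}$ and invoke minimality explicitly to make it constant.

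The paper's route is shorter. Yours pins down exactly what (i)--(iv) force and where the word ``minimal'' carries the argument, which is the sense in which the theorem is actually true.

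Your remark that a $U_i$-dependent $A$ is higher order can be made precise. Since the action is linear, equivariance at $U_i=0$ gives $A(0)=gA(0)g^{-1}$. Because $H(U_i,0)=0$, every term of $H$ carries a factor of $\delta$. So the lowest-order term in a joint expansion is $A(0)\delta$, with $A(0)$ automatically commuting with $G$.

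Finally, Schur's lemma gives $A\propto I$ only for absolutely irreducible (e.g.\ complex) representations. Over $\mathbb{R}$ the commutant of an irreducible representation can be $\mathbb{C}$ or $\mathbb{H}$, as for $SO(2)$ acting on $\mathbb{R}^2$. This is harmless here, since the statement already allows a $G$-equivariant operator.
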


\begin{proof}
By locality, $r_{ij}$ may depend only on $(U_i,U_j,\mathcal{T}_{ij})$. Gauge covariance requires that $r_{ij}$ transform under a local frame change at $i$ in the same way as $U_i$.

Consider the transported neighbor state $\mathcal{T}_{ij}U_j$. By construction,
\[
\mathcal{T}_{ij}U_j \mapsto g_i(\mathcal{T}_{ij}U_j),
\]
so both $U_i$ and $\mathcal{T}_{ij}U_j$ transform identically under gauge transformations at $i$. Any gauge-covariant quantity constructed from $(U_i,U_j,\mathcal{T}_{ij})$ must be expressible in terms of these two objects in the local frame at $i$.

The compatibility condition requires $r_{ij}$ to vanish if and only if $U_i=\mathcal{T}_{ij}U_j$. This excludes any construction that depends on $U_i$ or $\mathcal{T}_{ij}U_j$ separately. Accordingly, $r_{ij}$ must vanish precisely on the diagonal subset $\{(U_i,~\mathcal{T}_{ij}U_j):U_i = \mathcal{T}_{ij}U_j\}$.
Consequently, there exists a (possibly nonlinear) $G$-equivariant map $F$ such that
\begin{equation*}
r_{ij} = F\left(\mathcal{T}_{ij}U_j-U_i\right),
\label{eq:rij_as_equivariant_map}
\end{equation*}
with $F(0)=0$.

[Axiom~(\ref{ax:axm6})] then implies that the compatible configuration admits a well-defined tangent-space description: in a neighborhood where the transport relation is well defined, $F$ is differentiable at the origin and its first non-vanishing  term is linear. Therefore,
\begin{equation*}
r_{ij} = \phi \left(\mathcal{T}_{ij}U_j-U_i\right) + \mathcal{O}(\|\mathcal{T}_{ij}U_j-U_i\|^2),
\label{eq:rij_linearization}
\end{equation*}
where $\phi$ is a $G$-equivariant linear operator. In the absence of additional structural information or higher-order coupling terms, the simplest admissible form—which we identify as the primitive mismatch—is proportional to the covariant difference $\mathcal{T}_{ij}U_j-U_i$. This term represents the lowest-order gauge-covariant approximation of local incompatibility.
\end{proof}

\begin{defn}\textbf{Incompatibility.}
Theorem~\ref{thm:incompatibility} provides the admissible form of a local, gauge-consistent measure of failure of compatibility associated with a single transport relation. We therefore define the oriented local incompatibility associated with the ordered pair $(i \leftarrow j)$ as
\begin{equation}
r_{ij} \equiv \phi \big( \mathcal{T}_{ij} U_j - U_i \big),
\label{eq:rij_definition}
\end{equation}
where $\phi$ is a scalar weight invariant under local gauge transformations.
\end{defn}

The factor $\phi$ may encode material parameters such as conductivity or diffusivity, or discretization-dependent scaling. It introduces no additional frame dependence and no further structural assumptions beyond those required by Theorem~\ref{thm:incompatibility}.

\subsection{Residual}
\label{sec:residual}
Local compatibility is defined pointwise by the condition $U_i=\mathcal{T}_{ij}U_j$. Its violation, however, is not an intrinsic property of an individual point, but a relational property between neighboring regions. Incompatibility $r_{ij}$, therefore, manifests operationally only through comparisons across interfaces separating adjacent control volumes. Physical evolution, however, requires a node-wise quantity that summarizes the unresolved mismatch experienced at a point. For this reason, we introduce the \textit{residual} $R_i$ as a local accumulation of interfacial incompatibilities incident to node $i$. By construction, $R_i$ must depend only on neighbors of $i$ and remain finite under refinement of the neighborhood, so that it represents a well-defined local measure of unresolved incompatibility and admits a continuum limit.

\begin{cor}
\label{cor:node_residual}
Let $r_{ij}$ denote the oriented incompatibility associated with the pair $(i \leftarrow j)$. Any admissible node-wise residual $R_i$ that represents a local accumulation of these incompatibilities and remains finite under refinement of the neighborhood admits the following canonical representation:
\begin{equation}
R_i = \frac{1}{\Omega_i}\sum_{j\in\mathcal{N}(i)} A_{ij} r_{ij},
\end{equation}
up to an invariant scalar or $G$-equivariant linear operator. Here $\Omega_i$ is the local control volume and $A_{ij}$ is the measure of the interface.
\end{cor}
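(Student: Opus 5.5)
The plan is to argue in the same style as the proof of Theorem~\ref{thm:incompatibility}: first list the structural constraints that any admissible node-wise residual must satisfy, then show that together they force the stated form. The constraints are locality, gauge covariance at $i$, additivity over interfaces, and a finite refinement limit.

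\emph{Locality and covariance.} By locality [Axiom~(\ref{ax:axm1})], $R_i$ may depend only on the family $\{r_{ij}\}_{j\in\mathcal N(i)}$. By Theorem~\ref{thm:incompatibility}, each $r_{ij}$ transforms as $r_{ij}\mapsto g_i r_{ij}$. So every $r_{ij}$ lives in the same frame at $i$, and a covariant $R_i$ must be a $G$-equivariant map $R_i=\Phi(\{r_{ij}\})$ with $\Phi(0)=0$.

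\emph{Accumulation means additivity.} I will read ``accumulation'' as meaning that incompatibilities on distinct interfaces contribute independently. Operationally, splitting an interface into two subinterfaces should sum their contributions, in the sense of a finite-volume flux balance. This additivity, combined with first-order consistency [Axiom~(\ref{ax:axm6})], makes $\Phi$ linear to leading order. Hence $R_i=\sum_j c_{ij}\,\phi_{ij}(r_{ij})$ with equivariant linear $\phi_{ij}$. These operators are absorbed into the ``up to equivariant operator'' clause, leaving scalar weights $c_{ij}$.

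\emph{Fixing the weights.} I expect this step to be the main obstacle, since it is where the geometry enters even though no metric is prescribed. The weights are determined by two requirements.
\begin{itemize}
\item \emph{Subdivision invariance of each interface.} Refining an interface into pieces must leave the total unchanged. This forces $c_{ij}$ to be an additive set function of the shared boundary, i.e.\ proportional to the interface measure $A_{ij}$ in the observer chart.
\item \emph{Finite refinement limit.} Under refinement $\epsilon\to0$, the sum $\sum_j A_{ij} r_{ij}$ scales like the boundary flux through $\partial\Omega_i$. That flux vanishes with the volume, so it must be normalized by $\Omega_i$ to give a finite density.
\end{itemize}

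To make the normalization precise, I would compare with the divergence theorem. Writing $r_{ij}$ as a normal flux across face $ij$, the quantity $\frac{1}{\Omega_i}\sum_j A_{ij} r_{ij}$ converges to the divergence of the corresponding flux, deferring details to the continuum discussion in the Supplementary Information. Any other power of $\Omega_i$ either diverges or vanishes in this limit. That uniqueness argument completes the canonical form, up to the stated invariant scalar.
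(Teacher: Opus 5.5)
Your proposal follows the same route as the paper's proof: locality reduces $R_i$ to a function of $\{r_{ij}\}_{j\in\mathcal N(i)}$, covariance together with leading-order linearization gives a weighted sum $\sum_j a_{ij} r_{ij}$ with invariant weights, and the requirement that $R_i$ stay a finite, divergence-like density under refinement fixes $a_{ij}\propto A_{ij}/\Omega_i$. Your subdivision-additivity argument for the factor $A_{ij}$ and your divergence-theorem exponent argument for the factor $1/\Omega_i$ spell out what the paper asserts in a single step, so the proposal is correct at the paper's level of rigor.
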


\begin{proof}
By Theorem~\ref{thm:incompatibility}, the primitive local incompatibility associated with an ordered pair $(i\leftarrow j)$ is defined, up to invariant linear maps, by the covariant mismatch
$r_{ij}=\phi(\mathcal{T}_{ij}U_j-U_i)$.
Each such incompatibility is an interfacial quantity that transforms covariantly in the local frame at node $i$.

By locality, any node-wise residual $R_i$ may depend only on incompatibilities incident to node $i$, and therefore only on the collection
$\{r_{ij}\}_{j\in\mathcal{N}(i)}$.
Gauge covariance then restricts $R_i$, to leading order, to a linear combination of these quantities with invariant scalar weights,
\[
R_i = \sum_{j\in\mathcal{N}(i)} a_{ij} r_{ij}.
\]

The coefficients $a_{ij}$ are constrained by the requirement that $R_i$ remain finite under refinement of the neighborhood.
As the discretization is refined, the number of neighbors increases while the interface measures $A_{ij}$ and control volume $\Omega_i$ scale with resolution. To ensure that $R_i$ behaves as a well-defined density (i.e., a divergence-like operator), the weights must scale with the ratio of interface measure to volume.\\
Consequently, the minimal consistent construction that preserves the intensive nature of the residual is the area-weighted average over the control volume,
\[
R_i = \frac{1}{\Omega_i}\sum_{j\in\mathcal{N}(i)} A_{ij} r_{ij}.
\]
\end{proof}

\subsection{Global incompatibility measure}
\label{sec:L_def}


Having defined local incompatibility and its accumulation into node-wise residuals, we now introduce a global measure of unresolved mismatch. Because interactions are local and updates are asynchronous, there is no preferred global ordering parameter with respect to which evolution can be compared directly. Any scalar diagnostic of the system must therefore be constructed from locally defined quantities and remain invariant under independent local frame transformations.

The natural building blocks for such a measure are the node-wise residuals $R_i$, which represent the canonical local accumulation of interfacial incompatibility. A global measure must therefore aggregate these contributions additively, without introducing non-local couplings or external structure. Under these requirements, the admissible form of a global incompatibility measure is constrained.

\paragraph*{Structure of the global incompatibility measure near compatibility.}
Let $R_i$ denote the node-wise residual and $W_{ij}$ the edge-wise transport parameters. We seek a scalar functional $\mathcal{L}[U,W]$ that quantifies instantaneous incompatibility. While many invariant functionals are possible, their admissible form is constrained by the axioms.

\emph{Locality} [Axiom~(\ref{ax:axm1})] suggests an additive decomposition into node-wise and edge-wise contributions,
\begin{equation}
\mathcal{L} = \sum_i \ell(R_i) + \sum_{(i,j)} q(W_{ij}),
\end{equation}
precluding non-local couplings. \emph{Gauge invariance} [Axiom~(\ref{ax:axm3})] and \emph{Coercivity} [Axiom~(\ref{ax:axm4})] require that $\ell$ and $q$ be invariant under their respective representations and vanish only at the compatible state ($R=0, W=0$).

Near compatibility, local smoothness [Axiom~(\ref{ax:axm6})] and normability [Axiom~(\ref{ax:axm7})] imply that any admissible functional is captured at leading order by its quadratic expansion. In this regime, the canonical forms for $\ell$ and $q$ are:
\begin{align}
\ell(R)&\approx \tfrac12 \langle R,R\rangle, \nonumber \\
q(W)&\approx \tfrac12 \langle W,W\rangle,
\end{align}
where the inner products are $G$- and $\mathrm{Ad}$-invariant, respectively.

Because $R$ and $W$ transform in inequivalent representations, no gauge-invariant linear coupling exists between them at leading order. Consequently, the global measure is block-diagonal in the residual and structural variables. Collecting these terms, we identify the simplest admissible quadratic form for the global incompatibility measure:
\begin{equation}
\mathcal{L}[U,W] = \frac12 \sum_i \langle R_i,R_i\rangle+\frac{\lambda}{2} \sum_{(i,j)} \langle W_{ij},W_{ij}\rangle,
\end{equation}
where $\lambda>0$ acts as a structural stiffness parameter. This characterization provides a robust energy-like functional for driving relaxation dynamics in the vicinity of the compatible state.

\paragraph*{Interpretation of the structural stiffness term.}
The quadratic term in $W_{ij}$ assigns a finite cost to the transport structure itself. Because the transport relations encode the geometry through which neighboring states are compared, this term sets a scale for structural adaptation: small $\lambda$ permits rapid reconfiguration, while large $\lambda$ favors persistence of the existing geometry. This term may be viewed as a minimal covariant penalty on the connection, analogous to mass terms in gauge theories \cite{proca1936theorie, hehl2017gauge}, but here it reflects finite responsiveness rather than symmetry breaking.

The functional $\mathcal{L}[U,W]$ provides a gauge-invariant scalar measure of instantaneous incompatibility, combining both residual mismatch and the magnitude of the transport structure. Near compatibility, it defines a quadratic norm on admissible state–structure configurations. In regimes where the transport geometry is approximately fixed, $\mathcal{L}$ reduces to functionals familiar from gradient-flow and non-equilibrium thermodynamics. More generally, however, it extends these descriptions by including the geometry itself as a dynamical variable. The role of $\mathcal{L}$ is therefore not to prescribe dynamics, but to characterize the instantaneous configuration of the system. The evolution arises from its local relaxation through coupled updates of $U$ and $W$, rather than descent on a fixed functional defined over a prescribed geometry.

\subsection{Coupled relaxation of state and structure}
\label{sec:updates}
Having constructed the gauge-invariant incompatibility measure $\mathcal{L}[U,W]$, the framework admits a natural dynamical interpretation in which evolution proceeds through the relaxation of unresolved incompatibility. Because $\mathcal{L}$ depends on both the state variables $U$ and the transport structure $W$, this relaxation induces a coupled evolution of state and structure.

A canonical realization of this principle is to take variations of $\mathcal{L}$ with respect to $U$ and $W$ as the drivers of evolution, leading to gradient-driven update rules derived in the following section.

Although expressed in terms of a global functional, $\mathcal{L}$ is constructed as a sum of local contributions depending only on neighboring states and transport relations. Consequently, its variation yields update rules that depend only on local information. In this sense, $\mathcal{L}$ provides a global representation of locally defined incompatibility rather than a nonlocal objective requiring global coordination.

\subsubsection{Variation with respect to $W_{ij}$: structure relaxation} 
We first compute the gradient of $\mathcal{L}$ with respect to an individual edge connection $W_{ij}$, holding all other variables fixed. Under the directed convention adopted throughout this work, the interfacial incompatibility $r_{ij}$ is expressed in the local frame at node $i$.
Consequently, the transport parameter $W_{ij}$ appears \emph{only} in the residual $R_i$ and does not enter $R_m$ for any $m\neq i$.

Using the linearized transport representation $\mathcal{T}_{ij} \approx I+W_{ij}$, the node-wise residual $R_i$ is approximated as: 
\begin{equation}
R_i = \sum_{j\in\mathcal{N}(i)}\frac{\phi A_{ij}}{\Omega_i}\big(U_j-U_i + W_{ij}U_j\big),
\label{eq:Ri_linearized}
\end{equation}
and the variation of $R_i$ with respect to $W_{ij}$ is
\begin{equation}
\delta R_i = \frac{\phi A_{ij}}{\Omega_i}~~\delta W_{ij}~~U_j .
\label{eq:dRi_dWij}
\end{equation}
Let $\langle\cdot,\cdot\rangle$ denote the Euclidean inner product on the state space, and let $\langle A,B\rangle_F=\mathrm{tr}(A^\top B)$ denote the Frobenius inner product on matrices. We consider the first variation of $\mathcal{L}$ under the infinitesimal change $W_{ij} \mapsto W_{ij} + \epsilon \delta W_{ij}$, where $\epsilon \to 0$. The variation is given by:
\begin{align}
\delta\mathcal{L} 
&= \delta\left(\frac12\sum_i \|R_i\|^2\right) + \delta\left(\frac{\lambda}{2}\sum_{(i,j)\in E}\|W_{ij}\|_F^2\right) \nonumber\\
&= \sum_i \langle R_i,~~\delta R_i\rangle + \lambda\sum_{(i,j)\in E}\langle W_{ij},~~\delta W_{ij}\rangle_F \nonumber\\
&= \langle R_i,~~\delta R_i\rangle + \lambda\langle W_{ij},~~\delta W_{ij}\rangle_F,
\label{eq:first_variation_W}
\end{align}
since $W_{ij}$ uniquely influences $R_i$ and the corresponding $(i,j)$ term in the regularization.

Substituting Eq.~\eqref{eq:dRi_dWij} into the first term:
\begin{align}
\langle R_i,~~\delta R_i\rangle &= \frac{\phi A_{ij}}{\Omega_i}\langle R_i,~~\delta W_{ij}U_j\rangle \nonumber \\ 
&= \frac{\phi A_{ij}}{\Omega_i}\langle R_i U_j^\top,~~\delta W_{ij}\rangle_F,
\label{eq:trace_identity}
\end{align}
where the second line follows from the identity $\langle \mathbf{a}, \mathbf{M}\mathbf{b} \rangle = \text{tr}(\mathbf{a}^\top \mathbf{M} \mathbf{b}) = \text{tr}(\mathbf{b} \mathbf{a}^\top \mathbf{M}) = \langle \mathbf{a}\mathbf{b}^\top, \mathbf{M} \rangle_F$. Combining this with Eq.~\eqref{eq:first_variation_W}  yields:
\begin{equation}
\delta\mathcal{L} = \left\langle \Bigl[\frac{\phi A_{ij}}{\Omega_i}R_i U_j^\top + \lambda W_{ij}\Bigr],~~\delta W_{ij}\right\rangle_F.
\label{eq:variation_final}
\end{equation}
The gradient of $\mathcal{L}$ with respect to $W_{ij}$ is therefore:
\begin{equation} 
\nabla_{W_{ij}}\mathcal{L} = \frac{\phi A_{ij}}{\Omega_i}R_i U_j^\top + \lambda W_{ij}.
\label{eq:grad_Wij}
\end{equation}
A steepest-descent update with step size $\eta>0$ defines the canonical structural adaptation rule:
\begin{equation}
\Delta W_{ij} = -\eta_w\Big(\frac{\phi A_{ij}}{\Omega_i}R_i U_j^\top + \lambda W_{ij}\Big).
\label{eq:Hebbian_update_node-wise}
\end{equation}
Eq.~\eqref{eq:Hebbian_update_node-wise} defines a strictly local structural adaptation rule. The update of the transport parameter $W_{ij}$ depends only on the residual measured at node $i$, the activity of the neighboring node $j$, and a linear decay term. In the scalar case ($d=1$), this reduces to
\begin{equation}
\Delta W_{ij} = -\eta_w\big(\frac{\phi A_{ij}}{\Omega_i}\underbrace{R_i}_{\rm error}                                \times \underbrace{U_j}_{\rm activity} +
                            \underbrace{\lambda W_{ij}}_{\rm decay}\big),
\end{equation}
showing that the derived structural update has the form of a $(\text{local error} \times \text{activity} + \text{decay})$ rule. This algebraic structure is similar to update rules that appear in Hebbian-type learning models \cite{amari1991mathematical, gerstner2018eligibility}, but here the residual $R_i$ is not a supervised error signal. It is a direct measure of local geometric incompatibility, and the update follows from the minimization of the incompatibility functional under the locality and conservation constraints, rather than from an assumed learning objective.

\subsubsection{Variation with respect to $U_i$: state relaxation}
We compute the variation of $\mathcal{L}$ with respect to the state variables $U_i$, holding the transport structure $W_{ij}$ fixed. Unlike the structural parameters, the state $U_i$ enters the global incompatibility measure through two distinct channels: One, node $i$ as a Destination: $U_i$ appears directly in the local residual $R_i$, where it must be compatible with the incoming transport from neighbors $j \in \mathcal{N}(i)$, and two, when node $i$ acts as a Source: $U_i$ acts as a reference state for all neighboring nodes $k$ that point to $i$ (i.e., $\{k : i \in \mathcal{N}(k)\}$). Any change in $U_i$ therefore perturbs the residuals $R_k$ at those neighboring sites.

Following the definition of $R_i$ from Eq.~\eqref{eq:Ri_linearized}, the variation of the local residual when $i$ acts as a destination is:
\begin{equation}
\delta R_i \big|_\text{dst} = -\sum_{j\in\mathcal N(i)}\frac{\phi A_{ij} }{\Omega_i} \delta U_i = -\frac{\phi A_i}{\Omega_i} \delta U_i,\label{eq:dRi_dUi_dest}
\end{equation}
where $A_i = \sum_{j \in \mathcal{N}(i)} A_{ij}$ is the total interface area through which $i$ receives information.

Conversely, when $i$ acts as a source for a neighbor $j$, its state $U_i$ is transported via $(I + W_{ji})$ to the neighbor's frame. The resulting variation in the neighbor's residual is:
\begin{equation}
\delta R_j \big|_\text{src} = \frac{\phi A_{ji}}{\Omega_j}(I + W_{ji}) \delta U_i.
\label{eq:dRj_dUi_source}
\end{equation}

We consider the infinitesimal transformation $U_i \mapsto U_i + \epsilon \delta U_i$, where $\delta U_i$ is an arbitrary perturbation vector and $\epsilon$ is a vanishingly small scalar parameter. Using the inner product identity $\langle \mathbf{a}, \mathbf{M}\mathbf{b} \rangle = \langle \mathbf{M}^\top \mathbf{a}, \mathbf{b} \rangle$, we collect the variations from both roles to identify the total change in $\mathcal{L}$:
\begin{align}
\delta\mathcal{L} &= \underbrace{\langle R_i, \delta R_i \big|_\text{dst} \rangle}_\text{Destination term} + \underbrace{\sum_{j:i \in \mathcal{N}(j)} \langle R_j, \delta R_j \big|_\text{src} \rangle}_\text{Source term} \nonumber\\
&= \left\langle \Bigl[-\frac{\phi A_i}{\Omega_i} R_i + \sum_{j:i \in \mathcal{N}(j)} \frac{\phi A_{ji}}{\Omega_j} (I+W_{ji})^\top R_j\Bigr], ~~\delta U_i \right\rangle,
\label{eq:variation_Ui_dual}
\end{align}
where the transition to the final line utilizes the linearity of the inner product to factor out the common perturbation $\delta U_i$. By moving the transport operator $(I + W_{ji})$ from the right side of the inner product to the left, it is replaced by its adjoint (transpose) $(I + W_{ji})^\top$. The term inside the inner product with $\delta U_i$ is, by definition, the gradient $\nabla_{U_i}\mathcal{L}$. This gradient consists of a local restoration term (where $i$ acts as a destination) and a neighboring back-propagation term (where $i$ acts as a source):
\begin{equation}
\nabla_{U_i}\mathcal{L} = \underbrace{-\frac{\phi A_i}{\Omega_i} R_i}_\text{Local restoration} + \underbrace{\sum_{j:i \in \mathcal{N}(j)} \frac{\phi A_{ji}}{\Omega_j} (I+W_{ji})^\top R_j}_\text{Neighboring back-propagation}.
\label{eq:grad_Ui_dual}
\end{equation}
A steepest-descent update with step size $\eta_u > 0$ yields:
\begin{equation}
\Delta U_i = \eta_u \left( \frac{\phi A_i}{\Omega_i} R_i - \sum_{j:i \in \mathcal{N}(j)} \frac{\phi A_{ji}}{\Omega_j} (I+W_{ji})^\top R_j \right).
\label{eq:Ui_update_final}
\end{equation}
This update rule demonstrates that state relaxation is not merely a local correction. It is a bidirectional coordination: the first term relaxes $U_i$ to satisfy its own incoming transport, while the second term (the sum over neighbors) adjusts $U_i$ to reduce the incompatibility it ``causes" in its neighbors.

\subsection{Geometric Interpretation of Incompatibility}
The incompatibility defined above can be interpreted geometrically as a curvature associated with the transport geometry on the interaction graph. Nonzero incompatibility implies that transport around closed loops fails to return the state to itself, providing a discrete notion of holonomy. This geometric viewpoint becomes essential in the limiting cases discussed in Sec.~\ref{sec:limiting_cases}, where solenoidal and vorticity-like fields arise from the non-integrability of the transport relations. For completeness, the operational definition of holonomy and plaquette curvature is given in Appendix~\ref{sec:holonomy_curvature}.

\section{Limiting cases}
\label{sec:limiting_cases}
The formulation developed above is intrinsically asynchronous. Each node evolves its state and transport parameters through local residual-driven updates. Updates are indexed only by local relaxation events, which may occur at different rates across nodes and edges and need not admit a global ordering. In this sense, the dynamics is defined entirely in terms of local response to incompatibility, independent of an external clock.

In many physical regimes, however, an external observer does not resolve individual relaxation events. Instead, the system undergoes a large number of  asynchronous updates on a scale much shorter than the observer’s sampling interval. When this separation of scales exists, the detailed ordering of local events becomes unresolvable, and the cumulative effect characterizes the perceived evolution.

In this regime, we may introduce a global bookkeeping parameter $t$, or observer time, to index successive coarse-grained observations. This parameter is an effective coordinate rather than an intrinsic clock; it merely labels the accumulation of many unresolved local updates. This is the standard coarse-graining implicit in continuum physics: whenever the observation scale is large compared to the microscopic event rate, asynchronous dynamics admits an effective description in terms of smooth evolution.

Concretely, if $\Delta_n U_i$ denotes the increment of $U_i$ produced by the
$n$-th local update event affecting node $i$, then in the limit of frequent, small updates one may identify
\begin{equation}
\partial_t U_i \equiv \lim_{\Delta t \to 0} \frac{1}{\Delta t}
\mathbb{E}\Big[\sum_{n\in \mathfrak{E}_i(t,t+\Delta t)} \Delta_n U_i\Big],
\label{eq:time_limit}
\end{equation}
where $\mathfrak{E}_i(t,t+\Delta t)$ denotes the set of update events involving node
$i$ within the observation window $[t,t+\Delta t]$, and $\mathbb E[\cdot]$ denotes an average over local update events occurring within the observation window.

For small $\Delta t$, the expected number of local relaxation events occurring within this interval satisfies
\begin{equation}
\mathbb{E}[n(\Delta t)]\approx\alpha_1\Delta t.
\end{equation}
Assuming that the per-event increment varies negligibly over $[t,t+\Delta t]$, so that $\mathbb{E}[\Delta_n U_i]\approx \Delta U_i$, Eq.~\eqref{eq:time_limit} yields effective evolution equation: 
\begin{equation}
\partial_t U_i = \alpha_1\Delta U_i.
\label{eq:time_limit_2}
\end{equation}

This identification does not assert microscopic synchrony, nor does it require a well-defined intrinsic time variable. It assumes only that the ordering of sufficiently many small local updates becomes immaterial at the coarse-grained level, so that their net effect can be represented by a deterministic flow. 

In the supplementary information, we show detailed derivations of how hyperdiffusion (Sec. S2A), parabolic diffusion (Sec. S2B), incompressible Navier-Stokes (Sec. S2C), and Ampere-Maxwell law of electromagnetism (Sec. S2D) form the limiting cases of the same underlying asynchronous residual relaxation process.

\section{Discussion}
\subsection{Description consistency}
The global functional $\mathcal{L}[U,W]$ should not be interpreted as a free energy, entropy, loss function, or Lyapunov potential in the conventional sense. It is not introduced as a thermodynamic state function, nor as an externally prescribed objective to be optimized. Rather, it quantifies the instantaneous internal inconsistency of the current description: how far the local states and transport relations are from being mutually compatible under the assumed conservation and covariance structure.

The first term measures the accumulated residual mismatch across the system, while the second assigns a finite cost to maintaining nontrivial transport structure. In this sense, $\mathcal{L}$ is a measure of description consistency rather than an energetic quantity. Its decrease under the local update rules reflects the progressive restoration of mutual consistency between neighboring regions, not the minimization of a thermodynamic potential.

This interpretation is particularly transparent in the one-dimensional scalar case. There, the oriented incompatibility $r_{ij}$ plays a role analogous to a local phase mismatch $(\Delta \theta)$: it measures how far two neighboring descriptions fail to align under the current transport relation. From this viewpoint, $\mathcal{L}$ aggregates these local mismatches into a global measure of how internally coordinated the system's present description is.

\subsection{Relation to gauge-theoretic formulations}
The present formulation is closely related to gauge-theoretic descriptions, but differs in where structure is introduced. In conventional approaches, one specifies a symmetry group and an admissible class of connections, from which an action functional is constructed. Conservation laws and dynamical equations then follow from this prescribed structure, for example through Noether’s theorem. In this setting, symmetry is an input at the level of representation, and dynamics evolves within a fixed geometric framework.

In contrast, the present framework does not assume an underlying action or a prescribed class of transport relations. Instead, it begins with locally conserved quantities and independent local frame freedom, and defines incompatibility as the failure of neighboring states to be consistently related under the current transport geometry. The transport relations are then determined through the relaxation of this incompatibility. Symmetry is therefore not imposed as a prior constraint, but appears as a consistency condition satisfied by transport relations that successfully coordinate neighboring regions.

This distinction is not a reversal of Noether’s result, but a shift in the level at which assumptions are introduced. Rather than specifying symmetry and deriving conservation, the framework treats conservation as primitive and derives admissible transport structure from the requirement of local compatibility. In this sense, gauge covariance is assumed at the level of local frame freedom, but the transport structure that realizes consistent comparison is not prescribed in advance and must emerge through local relaxation.

This shift has practical consequences for systems in which the appropriate transport relations or symmetries are not known \emph{a priori}. By not requiring a predefined geometric or variational structure, the framework provides a basis for modeling non-equilibrium and multiscale systems in which the effective interactions and organizing principles must themselves be learned or adapt through local dynamics.

\subsection{Relational time and asynchrony}

A defining feature of this formulation is the absence of an intrinsic global time variable. State variables, transport relations, and residuals are defined locally, and the dynamics is specified not as evolution with respect to a universal clock, but as an ordering of local relaxation events. Any parameter used to index these updates serves only as a bookkeeping device; it does not imply \emph{a priori} simultaneity across the system.

Because the update laws depend exclusively on locally available information, the formulation is intrinsically asynchronous. No global synchronization or universal ``now'' is required for the dynamics to be well defined. In this sense, asynchrony is not a numerical convenience or implementation detail, but the natural mode of evolution for systems governed by finite-rate local interaction.

From this perspective, the progression of the system is determined by the gradual reduction of incompatibility under the currently available transport relations. Classical time-dependent evolution equations arise only as coarse-grained descriptions (Sec.~\ref{sec:limiting_cases}), when many unresolved local updates can be summarized by a smooth global time coordinate. Synchronous evolution then appears as an effective approximation, not as a primitive assumption.

\subsection{Adaptive plasticity and the emergence of structure}
The central result of this axiomatic construction is the identification of a gauge-invariant incompatibility measure $\mathcal{L}[U,W]$ whose relaxation drives the system's evolution. A key consequence of this dynamics is the update rule for the transport structure [Eq. \eqref{eq:Hebbian_update_node-wise}], which operates in two distinct qualitative regimes dictated by the state of the system.

Far from compatibility, the update is dominated by the correlation term $R_i U_j^\top$. In this regime, the transport geometry adapts to align with unresolved residuals, effectively ``learning" the pathways necessary to resolve incompatibility. As the system approaches a compatible state ($R_i \to 0$), the decay term $\lambda W_{ij}$ becomes the primary driver, suppressing the transport structure toward a locally trivial configuration.

This transition from structure-building to structure-suppression is a direct consequence of the coercivity and smoothness of the incompatibility measure. It defines an operational form of adaptive plasticity: the system modifies its internal relations only in response to active mismatch. Once compatibility is restored, unnecessary geometric structure is suppressed. This behavior emerges naturally from the requirement of local covariant relaxation, without the need for external objectives, rewards, or supervised labels.

\vspace{10pt}
To summarize, the present framework shifts the point at which structure enters the description of dynamics. Rather than prescribing differential operators, transport relations, or variational principles \emph{a priori}, it treats locally conserved quantities and local frame freedom as primitive, and identifies the relaxation of incompatibility as the fundamental driver of evolution. In this setting, both the state and the transport structure coevolve to restore local consistency, while familiar continuum descriptions and classical field equations emerge only as effective limiting forms once the transport relations stabilize.

This perspective also changes the role of time and synchronization. Evolution is defined through an ordering of local relaxation events, and the global time used in continuum descriptions appears only as a coarse-grained summary of many unresolved local updates. By not assuming a fixed comparison structure from the outset, the framework provides a basis for describing systems in which the effective relations governing transport and coordination must themselves emerge through local interaction.

\section{Limitations and Outlook}
\subsection{Limitations}
The framework developed in this work is deliberately local and perturbative with respect to the transport geometry. Our structural results rely on the existence of a locally smooth identification between neighboring degrees of freedom through transport operators lying in a normal neighborhood of the identity. This justifies the linearized representation $\mathcal{T}_{ij} \approx I + W_{ij}$, which underpins the quadratic form of the global measure $\mathcal{L}$ and the resulting gradient-driven relaxation.

However, two primary limitations define the boundary of the current formulation:

\begin{enumerate}
    \item Topological Persistence: A limitation of the present formulation is that the neighborhood structure is not treated as a fully dynamical variable. While the transport parameters ($W_{ij}$) evolve and may effectively suppress or deactivate interactions, the creation of new connections is not modeled. In this sense, the framework permits adaptive reweighting and pruning of existing edges, but does not yet include mechanisms for topological growth or rewiring. Extending the formulation to allow the existence of edges themselves to emerge or disappear dynamically based on the accumulated residual stress remains an important direction for future work.
    \item Perturbative Regime: The framework characterizes dynamics in regimes where compatibility is locally attainable through continuous adaptation. Systems characterized by singular transitions, discontinuous rewiring, or genuinely non-perturbative geometric changes—where incompatibility cannot be expressed as a covariant difference in a shared tangent space—fall outside the scope of this formulation.
\end{enumerate}
Rather than attempting to describe the origin of locality from a state of total chaos, this work isolates the regime in which geometry exists and adapts continuously. By focusing on this domain, we characterize how dynamics can be meaningfully expressed as the restoration of local compatibility.

\subsection{Outlook}
Within its stated axioms, the framework provides a minimal constructive description of local dynamics. Incompatibility, its accumulation, and the coupled relaxation of state and transport structure follow directly from locality and covariance. In this view, variational principles, and thermodynamic descriptions are not primitive postulates of the theory; they appear as stable limiting summaries once the transport geometry has sufficiently adapted to resolve local mismatch.

A key implication of this work is that it offers a bottom-up, relational language for ``world modeling," a central challenge in modern machine learning and robotics. Currently, much of the world-model program is top-down: one posits a parameterized predictor and trains it to match observations, with the ``model" emerging implicitly within the fitted function class \cite{lecun2022path}. While influential proposals emphasize learning predictive latent representations \cite{lecun2023}, these paradigms typically rely on an externally fixed representational frame. The ``rules of interaction" are learned from the perspective of that frame, rather than as a structure that must remain consistent under independent local reparameterizations.

The present framework suggests a complementary relational layer: consistency across local frames as a primitive requirement. In physical systems, comparisons are meaningful only through the transport structure $\mathcal{T}_{ij}$ that relates neighboring descriptions. Here, the world model is not a monolithic predictor, but a distributed, gauge-covariant geometry that is continuously negotiated by local incompatibility. This shifts the focus from global prediction to the restoration of mutual consistency, where the state $U$ and the relational structure $W$ co-adapt through purely local signals.

This perspective may also clarify ongoing discussions regarding open-ended objectives for adaptive agents. While imperatives like ``autonomy" or ``exploration" are often invoked, they frequently lack a local, physically composable metric. In contrast, the incompatibility measure $\mathcal{L}$ provides a measurable primitive: an agent or physical substrate is driven by locally witnessed mismatch, triggering structural plasticity whenever that mismatch persists. This provides a concrete mechanism for open-ended adaptation without prescribing a terminal task loss: The system can continually restructure its effective connectivity and constraints as long as incompatibility remains unresolved. Classical ``closed" laws of behavior emerge only when the system has reached a high degree of internal compatibility. Related critiques of current generative modeling emphasize the gap between token-level prediction and robust causal/world understanding, underscoring the need for objectives grounded in interaction and invariance rather than purely in reconstruction or imitation \cite{mitchell2023ai, huang2023}.

Finally, the same mechanism naturally connects to adaptive biological and neural systems, where effective connectivity evolves in response to persistent mismatch, and longer-term structural processes constrain admissible interaction geometries rather than prescribing detailed dynamics. In that reading, ``development" and ``learning" correspond to slow evolution of transport structure (or the neural pathways), while ``behavior" corresponds to fast state relaxation. Both are driven by the same incompatibility signals, but operate at different internal rates.

In conclusion, this work identifies compatibility restoration—expressed as covariant mismatch minimization with finite-rate structural adaptation—as an organizing principle for dynamics. By treating transport geometry as a dynamical object rather than fixed background we provide a constructive route from relational inconsistency to effective classical laws. This bottom-up logic suggests that for artificial systems to build robust world models, the missing ingredient may not be a more powerful global predictor, but a locally enforced, frame-consistent relational geometry that evolves in response to persistent mismatch.

\begin{acknowledgments}
This work was partly supported by JSPS KAKENHI Grant Number JP23K28154 and JST CREST Grant Number JPMJCR24R2.
\end{acknowledgments}

\appendix
\section{Holonomy and plaquette curvature}
\label{sec:holonomy_curvature}
The coupled state--structure update laws derived above depend only on local edge-wise transport operators $\mathcal{T}_{ij}$ and node-wise incompatibilities. To characterize circulation, curl-like effects, and geometric frustration of the learned transport structure, it is necessary to introduce loop-based diagnostics. These quantities do not modify the update laws themselves but provide a canonical geometric interpretation of the resulting transport field.

\subsection{Holonomy along closed loops}
Curvature is a relational property defined by transport around closed paths. For any oriented loop
\begin{equation}
\gamma = (i_0 \rightarrow i_1 \rightarrow \cdots \rightarrow i_n),
\qquad i_n=i_0,
\end{equation}
the holonomy based at node $i_0$ is defined as the ordered product
\begin{equation}
\mathcal{H}_\gamma(i_0)
=
\mathcal{T}_{i_0 i_1}
\mathcal{T}_{i_1 i_2}
\cdots
\mathcal{T}_{i_{n-1} i_n}.
\end{equation}
If the transport structure is locally integrable along $\gamma$, parallel transport returns a state to itself and $\mathcal{H}_\gamma(i_0)=I$. Any deviation from the identity indicates geometric frustration or curvature.

\subsection{Plaquette holonomy}
On a mesh with faces, we take the minimal loops to be the oriented boundaries of plaquettes. For a plaquette $f$ with ordered boundary
\begin{equation}
\partial f = (i \rightarrow j \rightarrow k \rightarrow \ell \rightarrow i),
\end{equation}
the plaquette holonomy is
\begin{equation}
\mathcal{H}_f(i) = \mathcal{T}_{ij} \mathcal{T}_{jk} \mathcal{T}_{k\ell} \mathcal{T}_{\ell i}.
\end{equation}
On general graphs without explicit faces, minimal cycles (e.g. triangles) may be used in an analogous manner.

\subsection{Curvature as a Lie-algebra element}
We define the discrete curvature associated with plaquette $f$ by the logarithm of its holonomy,
\begin{equation}
\kappa_f(i) \equiv \log\big(\mathcal{H}_f(i)\big) \in \mathfrak{g}.
\end{equation}
Vanishing curvature, $\kappa_f(i)=0$, is equivalent to trivial holonomy $\mathcal{H}_f(i)=I$ and characterizes locally integrable transport geometry. Under local changes of frame, $\kappa_f(i)$ transforms covariantly in the adjoint representation.

\subsection{Non-Abelian structure}
Substituting $\mathcal{T}_{ij}=\exp(W_{ij})$, the plaquette holonomy becomes a product of exponentials,
\begin{equation}
\mathcal{H}_f(i) = \prod_{(a\rightarrow b)\in\partial f} \exp(W_{ab}),
\end{equation}
and its logarithm generates nonlinear commutator corrections via the Baker--Campbell--Hausdorff expansion,
\begin{equation}
\kappa_f(i) = \sum_{(a\rightarrow b)\in\partial f} W_{ab} + \frac{1}{2} \sum_{e<e'}[W_e,W_{e'}] + \mathcal{O}(W^3).
\end{equation}
These commutator terms encode non-Abelian curvature and are retained throughout the general formulation.

\subsection{Node-wise curvature diagnostic}
For later use, a node-wise curvature (curl) diagnostic may be defined by averaging the plaquette curvatures incident to node $i$,
\begin{equation}
\kappa_i = \frac{1}{\sum_{f\ni i} A_f} \sum_{f\ni i} A_f\kappa_f(i),
\end{equation}
where $A_f$ denotes the plaquette area (or unity for unweighted graphs).

\bibliography{bibliography}

@article{einstein1935can,
  title = {Can Quantum-Mechanical Description of Physical Reality Be Considered Complete?},
  author = {Einstein, A. and Podolsky, B. and Rosen, N.},
  journal = {Phys. Rev.},
  volume = {47},
  issue = {10},
  pages = {777--780},
  numpages = {0},
  year = {1935},
  month = {May},
  publisher = {American Physical Society},
  doi = {10.1103/PhysRev.47.777},
  url = {https://link.aps.org/doi/10.1103/PhysRev.47.777}
}

@article{maxwell1865viii,
  title={{VIII. A dynamical theory of the electromagnetic field}},
  author={Maxwell, James Clerk},
  journal={Phil. Trans. R. Soc. Lond.},
  number={155},
  pages={459--512},
  year={1865},
  publisher={The Royal Society London},
  url={https://doi.org/10.1098/rstl.1865.0008}
}

@book{goldstein1950classical,
  title={{Classical mechanics}},
  author={Goldstein, Herbert and Poole, Charles P and Safko, John},
  volume={2},
  year={1950},
  publisher={Addison-wesley Reading, MA}
}

@book{poincare1899methodes,
  title={Les m{\'e}thodes nouvelles de la m{\'e}canique c{\'e}leste},
  author={Poincar{\'e}, Henri},
  volume={3},
  year={1899},
  publisher={Gauthier-Villars}
}

@misc{levicivita2022,
  title={Notion of Parallelism on a Generic Manifold and Consequent Geometrical Specification of the Riemannian Curvature}, 
  author={Tullio Levi-Civita},
  year={2022},
  eprint={2210.13239},
  archivePrefix={arXiv},
  primaryClass={gr-qc},
  url={https://arxiv.org/abs/2210.13239},
  note  = {Translated by Marco Godina and Julian Delens from the Italian edition of original Levi-Civita paper: {Rend. Circ. Mat. Palermo, (1917), Vol.42 (1), pp. 173-204}.}
}

@article{yang1954conservation,
  title = {Conservation of Isotopic Spin and Isotopic Gauge Invariance},
  author = {Yang, C. N. and Mills, R. L.},
  journal = {Phys. Rev.},
  volume = {96},
  issue = {1},
  pages = {191--195},
  numpages = {0},
  year = {1954},
  month = {Oct},
  publisher = {American Physical Society},
  doi = {10.1103/PhysRev.96.191},
  url = {https://link.aps.org/doi/10.1103/PhysRev.96.191}
}

@book{strogatz2024nonlinear,
  title={{Nonlinear dynamics and chaos: with applications to physics, biology, chemistry, and engineering}},
  author={Strogatz, Steven H},
  year={2024},
  publisher={Chapman and Hall, New York},
  location={New York}
}

@incollection{truesdell2004non,
  title={The non-linear field theories of mechanics},
  author={Truesdell, Clifford and Noll, Walter},
  booktitle={The non-linear field theories of mechanics},  
  year={2004},
  publisher={Springer Berlin, Heidelberg}
}

@article{onsager1931reciprocal,
  title = {{Reciprocal Relations in Irreversible Processes. I.}},
  author = {Onsager, Lars},
  journal = {Phys. Rev.},
  volume = {37},
  issue = {4},
  pages = {405--426},
  numpages = {0},
  year = {1931},
  month = {Feb},
  publisher = {American Physical Society},
  doi = {10.1103/PhysRev.37.405},
  url = {https://link.aps.org/doi/10.1103/PhysRev.37.405}
}

@article{onsager1931reciprocal2,
  title = {{Reciprocal Relations in Irreversible Processes. II.}},
  author = {Onsager, Lars},
  journal = {Phys. Rev.},
  volume = {38},
  issue = {12},
  pages = {2265--2279},
  numpages = {0},
  year = {1931},
  month = {Dec},
  publisher = {American Physical Society},
  doi = {10.1103/PhysRev.38.2265},
  url = {https://link.aps.org/doi/10.1103/PhysRev.38.2265}
}

@article{prigogine1967symmetry,
  title={On symmetry-breaking instabilities in dissipative systems},
  author={Prigogine, Ilya and Nicolis, Gr{\'e}goire},
  journal={J. Chem. Phys.},
  volume={46},
  number={9},
  pages={3542--3550},
  year={1967},
  publisher={American Institute of Physics},
  url={https://doi.org/10.1063/1.1841255}
}

@article{chorin2000optimal,
  title={Optimal prediction and the Mori--Zwanzig representation of irreversible processes},
  author={Chorin, Alexandre J and Hald, Ole H and Kupferman, Raz},
  journal={Proc. Natl. Acad. Sci. USA},
  volume={97},
  number={7},
  pages={2968--2973},
  year={2000},
  publisher={The National Academy of Sciences},
  url={https://doi.org/10.1073/pnas.97.7.2968}
}

@article{wheeler1945interaction,
  title = {Interaction with the Absorber as the Mechanism of Radiation},
  author = {Wheeler, John Archibald and Feynman, Richard Phillips},
  journal = {Rev. Mod. Phys.},
  volume = {17},
  issue = {2-3},
  pages = {157--181},
  numpages = {0},
  year = {1945},
  month = {Apr},
  publisher = {American Physical Society},
  doi = {10.1103/RevModPhys.17.157},
  url = {https://link.aps.org/doi/10.1103/RevModPhys.17.157}
}

@article{wheeler1949classical,
  title = {Classical Electrodynamics in Terms of Direct Interparticle Action},
  author = {Wheeler, John Archibald and Feynman, Richard Phillips},
  journal = {Rev. Mod. Phys.},
  volume = {21},
  issue = {3},
  pages = {425--433},
  numpages = {0},
  year = {1949},
  month = {Jul},
  publisher = {American Physical Society},
  doi = {10.1103/RevModPhys.21.425},
  url = {https://link.aps.org/doi/10.1103/RevModPhys.21.425}
}

@article{rovelli1991time,
  title = {Time in quantum gravity: An hypothesis},
  author = {Rovelli, Carlo},
  journal = {Phys. Rev. D},
  volume = {43},
  issue = {2},
  pages = {442--456},
  numpages = {0},
  year = {1991},
  month = {Jan},
  publisher = {American Physical Society},
  doi = {10.1103/PhysRevD.43.442},
  url = {https://link.aps.org/doi/10.1103/PhysRevD.43.442}
}

@article{wilson1974confinement,
  title = {Confinement of quarks},
  author = {Wilson, Kenneth G.},
  journal = {Phys. Rev. D},
  volume = {10},
  issue = {8},
  pages = {2445--2459},
  numpages = {0},
  year = {1974},
  month = {Oct},
  publisher = {American Physical Society},
  doi = {10.1103/PhysRevD.10.2445},
  url = {https://link.aps.org/doi/10.1103/PhysRevD.10.2445}
}

@article{kogut1979introduction,
  title = {An introduction to lattice gauge theory and spin systems},
  author = {Kogut, John B.},
  journal = {Rev. Mod. Phys.},
  volume = {51},
  issue = {4},
  pages = {659--713},
  numpages = {0},
  year = {1979},
  month = {Oct},
  publisher = {American Physical Society},
  doi = {10.1103/RevModPhys.51.659},
  url = {https://link.aps.org/doi/10.1103/RevModPhys.51.659}
}

@book{amari2000methods,
  title={Methods of information geometry},
  author={Amari, Shun-ichi and Nagaoka, Hiroshi},
  volume={191},
  year={2000},
  publisher={American Mathematical Society, Rhode Island, USA},
  location ={Rhode Island, USA}
}

@article{bronstein2017geometric,
  title={Geometric deep learning: going beyond euclidean data},
  author={Bronstein, Michael M and Bruna, Joan and LeCun, Yann and Szlam, Arthur and Vandergheynst, Pierre},
  journal={IEEE Signal Process. Mag.},
  volume={34},
  number={4},
  pages={18--42},
  year={2017},
  publisher={IEEE},
  url={https://doi.org/10.1109/MSP.2017.2693418}
}

@inproceedings{bruna2014spectral,
  author= {Joan Bruna and Wojciech Zaremba and Arthur Szlam and Yann LeCun},
  editor = {Yoshua Bengio and Yann LeCun},
  title = {Spectral Networks and Locally Connected Networks on Graphs},
  booktitle = {2nd International Conference on Learning Representations, {ICLR} 2014, Banff, AB, Canada, April 14-16, 2014, Conference Track Proceedings},
  year  = {2014},
  url  = {http://arxiv.org/abs/1312.6203}  
}

@article{auti2026hebbian,
  title = {Hebbian Physics Networks: A self-organizing computational architecture based on local physical laws},
  author = {Auti, Gunjan and Daiguji, Hirofumi and Tanaka, Gouhei},
  journal = {Phys. Rev. Res.},
  volume = {8},
  issue = {1},
  pages = {013309},
  numpages = {14},
  year = {2026},
  month = {Mar},
  publisher = {American Physical Society},
  doi = {10.1103/tzgk-jqj4},
  url = {https://link.aps.org/doi/10.1103/tzgk-jqj4}
}

@book{landau1987fluid,
  title={Fluid Mechanics},
  author={Landau, Lev Davidovich and Lifshitz, Evgeny Mikhailovich},
  volume={6},
  year={1987},
  publisher={Pergamon Press},
  address={Oxford},
  edition={2nd}
}

@article{marchetti2013hydrodynamics,
  title = {Hydrodynamics of soft active matter},
  author = {Marchetti, M. C. and Joanny, J. F. and Ramaswamy, S. and Liverpool, T. B. and Prost, J. and Rao, Madan and Simha, R. Aditi},
  journal = {Rev. Mod. Phys.},
  volume = {85},
  issue = {3},
  pages = {1143--1189},
  numpages = {0},
  year = {2013},
  month = {Jul},
  publisher = {American Physical Society},
  doi = {10.1103/RevModPhys.85.1143},
  url = {https://link.aps.org/doi/10.1103/RevModPhys.85.1143}
}

@book{de2013non,
  title={Non-equilibrium thermodynamics},
  author={De Groot, Sybren Ruurds and Mazur, Peter},
  year={2013},
  publisher={Dover Publications, New York}
}

@article{bazant2009towards,
  title={Towards an understanding of induced-charge electrokinetics at large applied voltages in concentrated solutions},
  author={Bazant, Martin Z and Kilic, Mustafa Sabri and Storey, Brian D and Ajdari, Armand},
  journal={Adv. Colloid Interface Sci.},
  volume={152},
  number={1-2},
  pages={48--88},
  year={2009},
  publisher={Elsevier},
  url={https://doi.org/10.1016/j.cis.2009.10.001}
}

@article{saintillan2013active,
  title={Active suspensions and their nonlinear models},
  author={Saintillan, David and Shelley, Michael J},
  journal={C. R. Phys.},
  volume={14},
  number={6},
  pages={497--517},
  year={2013},
  publisher={Elsevier},
  url={https://doi.org/10.1016/j.crhy.2013.04.001}
}

@article{rovelli2011forget,
  title={{“Forget time” Essay written for the FQXi contest on the Nature of Time}},
  author={Rovelli, Carlo},
  journal={Found. Phys.},
  volume={41},
  number={9},
  pages={1475--1490},
  year={2011},
  publisher={Springer},
  url={https://doi.org/10.1007/s10701-011-9561-4}
}

@article{cross1993pattern,
  title = {Pattern formation outside of equilibrium},
  author = {Cross, M. C. and Hohenberg, P. C.},
  journal = {Rev. Mod. Phys.},
  volume = {65},
  issue = {3},
  pages = {851--1112},
  numpages = {0},
  year = {1993},
  month = {Jul},
  publisher = {American Physical Society},
  doi = {10.1103/RevModPhys.65.851},
  url = {https://link.aps.org/doi/10.1103/RevModPhys.65.851}
}

@article{ramaswamy2010mechanics,
  title={The mechanics and statistics of active matter},
  author={Ramaswamy, Sriram},
  journal={Annu. Rev. Condens. Matter Phys.},
  volume={1},
  number={1},
  pages={323--345},
  year={2010},
  publisher={Annual Reviews},
  url={https://doi.org/10.1146/annurev-conmatphys-070909-104101}
}

@article{proca1936theorie,
  title={Sur la th{\'e}orie ondulatoire des {\'e}lectrons positifs et n{\'e}gatifs},
  author={Proca, AL},
  journal={J. Phys. Radium},
  volume={7},
  number={8},
  pages={347--353},
  year={1936},
  publisher={Soci{\'e}t{\'e} Fran{\c{c}}aise de Physique},
  url={https://doi.org/10.1051/jphysrad:0193600708034700}
}

@article{amari1991mathematical,
  title={Mathematical theory of neural learning},
  author={Amari, Shun-ichi},
  journal={New Generation Computing},
  volume={8},
  number={4},
  pages={281--294},
  year={1991},
  publisher={Springer},
  url={https://doi.org/10.1007/BF03037088}
}

@article{gerstner2018eligibility,
  title={Eligibility traces and plasticity on behavioral time scales: experimental support of neohebbian three-factor learning rules},
  author={Gerstner, Wulfram and Lehmann, Marco and Liakoni, Vasiliki and Corneil, Dane and Brea, Johanni},
  journal={Frontiers in neural circuits},
  volume={12},
  pages={53},
  year={2018},
  publisher={Frontiers Media SA},
  url={https://doi.org/10.3389/fncir.2018.00053}
}

@Inbook{hehl2017gauge,
    author={Hehl, Friedrich W.},
    editor={Lehmkuhl, Dennis and Schiemann, Gregor and Scholz, Erhard},
    title={Gauge Theory of Gravity and Spacetime},
    bookTitle={Towards a Theory of Spacetime Theories},
    year={2017},
    publisher={Springer New York},
    address={New York, NY},
    pages={145--169},
    isbn={978-1-4939-3210-8},
    doi={10.1007/978-1-4939-3210-8_5},
    url={https://doi.org/10.1007/978-1-4939-3210-8_5}
}

@misc{huang2023,
     author = {{Nvidia on-demand}},
     title = {{Fireside Chat with Ilya Sutskever and Jensen Huang: AI Today and Vision of the Future}},
     howpublished = {\url{https://www.nvidia.com/en-us/on-demand/session/gtcspring23-s52092}},
     year = {2023},   
}

@article{lecun2022path, 
  title={Introduction to latent variable energy-based models: a path toward autonomous machine intelligence},
  author={Dawid, Anna and LeCun, Yann},
  journal={Journal of Statistical Mechanics: Theory and Experiment},
  volume={2024},
  number={10},
  pages={104011},
  year={2024},
  publisher={IOP Publishing},
  url={https:/doi.org/10.1088/1742-5468/ad292b}
}

@misc{lecun2023,
  author = {Meta},
  title = {{I-JEPA: The first AI model based on Yann LeCun’s vision for more human-like AI}},
  howpublished = {\url{https://ai.meta.com/blog/yann-lecun-ai-model-i-jepa}},
  year = {2023},   
}

@article{mitchell2023ai,
  title={{AI’s challenge of understanding the world}},
  author={Mitchell, Melanie},
  journal={Science},
  volume={382},
  number={6671},
  pages={eadm8175},
  year={2023},
  publisher={American Association for the Advancement of Science},
  url={https://www.science.org/doi/10.1126/science.adm8175}
}

\end{document}